\DeclareMathAlphabet{\mathcal}{OMS}{cmsy}{m}{n}
\newcommand{\svs}{\vspace{0.7mm}}
\newcommand{\vs}{\vspace{1.5mm}}
\newtheorem{theorem}{Theorem}[section]
\newtheorem{claim}[theorem]{Claim}
\newtheorem{lemma}[theorem]{Lemma}
\newtheorem{corollary}[theorem]{Corollary}
\newtheorem{remark}[theorem]{Remark}
\newtheorem{assumption}[theorem]{Assumption}
\newcommand{\G}{\mathbb{G}}
\newcommand{\Z}{\mathbb{Z}}
\newcommand{\bits}{\{0,1\}}
\newcommand{\Adv}{\textbf{Adv}}
\newcommand{\mc}[1]{\mathcal{#1}}
\newcommand{\tb}[1]{\textbf{#1}}
\title{Security Analysis of the Unrestricted Identity-Based\\ Aggregate
Signature Scheme}
\author{
    Kwangsu Lee\footnote{Korea University, Korea.
        Email: \texttt{guspin@korea.ac.kr}.}
    \and
    Dong~Hoon Lee\footnote{Korea University, Korea.
        Email: \texttt{donghlee@korea.ac.kr}.}
}
\date{}
\begin{document}

\maketitle

\begin{abstract}
Aggregate signatures allow anyone to combine different signatures signed by
different signers on different messages into a single short signature. An
ideal aggregate signature scheme is an identity-based aggregate signature
(IBAS) scheme that supports full aggregation since it can reduce the total
transmitted data by using an identity string as a public key and anyone can
freely aggregate different signatures. Constructing a secure IBAS scheme
that supports full aggregation in bilinear maps is an important open
problem. Recently, Yuan {\it et al.} proposed an IBAS scheme with full
aggregation in bilinear maps and claimed its security in the random oracle
model under the computational Diffie-Hellman assumption. In this paper, we
show that there exists an efficient forgery attacker on their IBAS scheme
and their security proof has a serious flaw.
\end{abstract}

\vs \noindent {\bf Keywords:} Identity-based signature, Aggregate signature,
Security analysis, Bilinear map.

\section{Introduction}

Aggregate signature schemes allow anyone to combine $n$ different signatures
on different $n$ messages signed by different $n$ signers into a single short
aggregate signature. The main advantage of aggregate signature schemes is to
reduce the communication and storage overhead of signatures by compressing
these signatures into a single signature. The application of aggregate
signature schemes includes secure routing protocols, public-key
infrastructure systems, and sensor networks. Boneh {\it et al.}
\cite{BonehGLS03} proposed the first full aggregate signature scheme in which
anyone can combine different signatures in bilinear groups and proved its
security in the random oracle model. After that, Lysyanskaya {\it et al.}
\cite{LysyanskayaMRS04} constructed a sequential aggregate signature scheme
such that a signature can be combined in sequential order, and Gentry and
Ramzan \cite{GentryR06} proposed a synchronized aggregate signature scheme
such that all signers should share synchronized information. There are many
other aggregate signature schemes with different properties \cite{LuOSSW06,
AhnGH10,BoldyrevaGOY10,Schroder11,GerbushLOW12,LeeLY13s,LeeLY13a,
HohenbergerSW13}.

Although aggregate signature schemes can reduce the size of signatures by
aggregation, they usually cannot reduce the total amount of transmitted data
significantly since a verifier should retrieve all public keys of the
signers. Therefore, reducing the size of public keys is also an important
issue in aggregate signature schemes \cite{Schroder11,LeeLY13s,LeeLY13a}. An
ideal solution for this problem is to use an identity-based aggregate
signature (IBAS) scheme since it uses an already known identity string as the
public key of a user \cite{GentryR06}. However, there is only one IBAS scheme
with full aggregation that was proposed by Hohenberger {\it et al.}
\cite{HohenbergerSW13} in multilinear maps. The multilinear map is an
attractive tool for cryptographic constructions, but it is currently
impractical since it's basis is a leveled homomorphic encryption scheme
\cite{GargGH13}. There are some IBAS schemes in bilinear maps, but these IBAS
schemes only support sequential aggregation or synchronized aggregation
\cite{GentryR06,BoldyrevaGOY10,GerbushLOW12}. Therefore, construction an IBAS
scheme with full aggregation in bilinear maps is an important open problem.

The main reason for the difficulty of devising an IBAS scheme with full
aggregation is that it seems not easy to find a way to aggregate the
randomness of all signers in which each randomness of a signer is used to
hide the private key of each signer in a signing process \cite{GentryR06}.
For this reason, current IBAS schemes only support synchronized aggregation
or sequential aggregation to aggregate the randomness of all signers
\cite{GentryR06,BoldyrevaGOY10}. Additionally, designing a secure IBAS scheme
is not a easy task since even the original version of Boldyreva {\it et
al.}'s IBAS scheme \cite{BoldyrevaGOY07} was broken by Hwang {\it et al.}
\cite{HwangLY09} and then it was corrected later. Recently, Yuan {\it et al.}
proposed an IBAS scheme with full aggregation in bilinear maps and claimed it
security in random oracle models \cite{YuanZH14}. The authors first proposed
an IBS scheme in bilinear maps and constructed an IBAS scheme from the IBS
scheme. To prove the security their IBS scheme, the authors claimed that the
security of their IBS scheme can be proven under the computational
Diffie-Hellman (CDH) assumption by using Forking Lemma in the random oracle
model.

In this paper, we show that the IBS and IBAS schemes of Yuan {\it et al.} are
not secure at all. First, we show that there exists a universal forgery
attack against the IBS scheme of Yuan {\it et al.} by using two signatures.
This forgery attack also applies to their IBAS scheme. One may wonder that
our forgery attack contradicts their claims of the security of the schemes.
To solve this, we next show that the security proof of Yuan {\it et al.}'s
IBS scheme has a serious flaw. The security proof of the IBS scheme
essentially use the fact that two signatures that are obtained by using
Forking Lemma have the same randomness in signatures. However, we show that
the forged signatures of an adversary cannot satisfy this condition since the
signature of Yuan {\it et al.}'s IBS scheme is publicly re-randomizable.

This paper is organized as follows: In Section 2, we review bilinear groups
and the IBS and IBAS schemes of Yuan {\it et al.} In Section 3, we present a
universal forgery against the IBS scheme. In Section 4, we analyze the
security proof of Yuan {\it et al.}'s IBS scheme and show that the proof has
a serious flaw.

\section{The IBAS Scheme of Yuan {\it et al.}}

In this section, we review bilinear groups and the IBS and IBAS schemes of
Yuan {\it et al.} \cite{YuanZH14}.

\subsection{Bilinear Groups and Complexity Assumptions}

Let $\G$ and $\G_T$ be two multiplicative cyclic groups of same prime order
$p$ and $g$ be a generator of $\G$. The bilinear map $e:\G \times \G
\rightarrow \G_{T}$ has the following properties:
\begin{enumerate}
\item Bilinearity: $\forall u, v \in \G$ and $\forall a,b \in \Z_p$,
    $e(u^a,v^b)=e(u,v)^{ab}$.
\item Non-degeneracy: $\exists g$ such that $e(g,g)$ has order $p$, that
    is, $e(g,g)$ is a generator of $\G_T$.
\end{enumerate}
We say that $\G$ is a bilinear group if the group operations in $\G$ and
$\G_T$ as well as the bilinear map $e$ are all efficiently computable.
Furthermore, we assume that the description of $\G$ and $\G_T$ includes
generators of $\G$ and $\G_T$ respectively.

\begin{assumption}[Computational Diffie-Hellman, CDH]
Let $(p, \G, \G_T, e)$ be a description of the bilinear group of prime order
$p$. Let $g$ be generators of subgroups $\G$. The CDH assumption is that if
the challenge tuple $D = \big( (p, \G, \G_T, e), g, g^a, g^b \big)$ is given,
no PPT algorithm $\mc{A}$ can output $g^{ab} \in \G$ with more than a
negligible advantage. The advantage of $\mc{A}$ is defined as
$\Adv_{\mc{A}}^{CDH} (\lambda) = \Pr[\mc{A}(D) = g^{ab}]$ where the
probability is taken over random choices of $a, b \in \Z_p$.
\end{assumption}

\subsection{The Identity-Based Signature Scheme}

The IBS scheme consists of \tb{Setup}, \tb{GenKey}, \tb{Sign}, and
\tb{Verify} algorithms. The IBS scheme of Yuan {\it et al.} \cite{YuanZH14}
is described as follows:

\begin{description}
\item [\tb{Setup}($1^{\lambda}$):] This algorithm takes as input a security
    parameter $1^{\lambda}$. It generates bilinear groups $\G, \G_T$ of
    prime order $p$. Let $g$ be a random generator of $\G$. It chooses
    random exponents $s_1, s_2 \in \Z_p^*$ and two cryptographic hash
    functions $H_1 : \bits^* \rightarrow \G$ and $H_2 : \bits^* \rightarrow
    \Z_p^*$. It outputs a master key $MK = (s_1, s_2)$ and public
    parameters $PP = \big( (p, \G, \G_T, e),~ g,~ g_1 = g^{s_1}, g_2 =
    g^{s_2},~ H_1, H_2 \big)$.

\item [\tb{GenKey}($ID, MK, PP$):] This algorithm takes as input an
    identity $ID \in \bits^*$, the master key $MK = (s_1, s_2)$, and the
    public parameters $PP$. It outputs a private key $SK_{ID} = \big( D_1 =
    H_1(ID)^{s_1} ,~ D_2 = H_1(ID)^{s_2} \big)$.

\item [\tb{Sign}($M, SK_{ID}, PP$):] This algorithm takes as input a
    message $M \in \bits^*$, a private key $SK_{ID} = (D_1, D_2)$, and the
    public parameters $PP$. It selects a random exponent $r \in \Z_p^*$ and
    computes $h = H_2(ID \| M)$. It outputs a signature $\sigma = \big( U =
    g^r,~ V = D_1^{h} \cdot g_1^{r},~ W = D_2 \cdot g_2^{r} \big)$.

\item [\tb{Verify}($\sigma, ID, M, PP$):] This algorithm takes as input a
    signature $\sigma = (U, V, W)$, an identity $ID \in \bits^*$, a message
    $M \bits^*$, and the public parameters $PP$. It computes $h = H(ID \|
    M)$ and checks whether $e(V, g) \stackrel{?}{=} e(H_1(ID)^{h} \cdot U,
    g_1)$ and $e(W, g) \stackrel{?}{=} e(H_1(ID) \cdot U, g_2)$. If both
    equations hold, then it outputs $1$. Otherwise, it outputs $0$.
\end{description}

\begin{claim}[\cite{YuanZH14}]
The above IBS scheme is existentially unforgeable under chosen message
attacks in the random oracle model if the CDH assumption holds.
\end{claim}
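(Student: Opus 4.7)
The plan is to reduce the CDH problem to forgery in the random oracle model via the Forking Lemma, following the standard template for pairing-based identity-based signatures. Given a CDH challenge $(g, A = g^a, B = g^b)$, I would set up the public parameters by putting $g_1 = A$ (so that $s_1 = a$ is unknown to the simulator), while choosing $s_2 \in \Z_p^*$ honestly and setting $g_2 = g^{s_2}$. Random oracle queries to $H_1$ are answered by guessing a target identity $ID^*$ in advance (or via the coron technique): for $ID \ne ID^*$, set $H_1(ID) = g^{t_i}$ with a freshly sampled $t_i$ so that the corresponding private key can be computed as $D_1 = A^{t_i}$ and $D_2 = (g^{t_i})^{s_2}$; for $ID = ID^*$, set $H_1(ID^*) = B$ so that $D_1 = g^{ab}$ is precisely the CDH solution. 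Private key extraction queries succeed except on $ID^*$, where the simulator aborts. Signing queries on $ID \ne ID^*$ are answered normally; signing queries on $(ID^*, M)$ are simulated by programming $H_2(ID^* \| M) = h$ and implicitly setting $r = r' - bh$ for a freshly chosen $r'$, so that $V = A^{r'}$ and $U = g^{r'} / B^{h}$ become computable, while $W$ is produced using the known $s_2$.

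Next I would invoke the Forking Lemma on the adversary's forgery, which by assumption lies on $(ID^*, M^*)$ with hash value $h = H_2(ID^* \| M^*)$ queried at some index $i$. Rewinding to query $i$ and re-answering with a fresh $h' \ne h$ yields, with noticeable probability, a second forgery on the same message. Denoting the two forgeries by $(U, V, W)$ and $(U', V', W')$, verification gives
\[
e(V, g) = e(H_1(ID^*)^{h} \cdot U, g_1), \qquad e(V', g) = e(H_1(ID^*)^{h'} \cdot U', g_1).
\]
If one can argue that the two forgeries share the same $U$ component (equivalently, the same underlying $r$), then dividing the two verification equations in the exponent yields $V/V' = D_1^{h - h'}$, so that $D_1 = (V/V')^{1/(h - h')}$ recovers $g^{ab}$, completing the reduction.

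The hard part will be justifying that the two rewound forgeries actually use the same randomness $r$. In classical Forking-Lemma proofs for Schnorr-like signatures, this is ensured because the randomness is bound to the hash input (e.g.\ $h = H(M, g^r)$), so a consistent transcript up to the forked query forces $r$ to be identical in both runs. In Yuan \emph{et al.}'s scheme, however, $H_2$ is evaluated only on $ID \| M$ and does not commit to $U$, and moreover the signature admits public re-randomization $(U, V, W) \mapsto (U g^{r''}, V g_1^{r''}, W g_2^{r''})$. Consequently, even a transcript that is identical up to the forked query does not pin down the adversary's final $U$, and the division step above collapses to $V/V' = D_1^{h - h'} \cdot g_1^{r - r'}$ with an unknown $g_1^{r - r'}$ term that cannot be cancelled without already knowing $a$. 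Making the reduction go through would therefore require either binding $r$ into $H_2$ (a scheme change) or a nontrivial argument that forces $r = r'$ in the rewound execution; I expect this step to be the crux where any attempted proof must either succeed or, as the remainder of this paper shows, fail.
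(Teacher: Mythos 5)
This claim cannot be proved, because it is false: the paper's Section~3 exhibits an explicit universal forgery against the scheme. Given two signing-oracle responses $\sigma_1, \sigma_2$ on $(ID^*, M_1)$ and $(ID^*, M_2)$ with $h_1 \neq h_2$, one solves the $2\times 2$ system $h_1\delta_1 + h_2\delta_2 = h^*$, $\delta_1 + \delta_2 = 1 \pmod p$ and outputs $\sigma^* = (U_1^{\delta_1}U_2^{\delta_2},\, V_1^{\delta_1}V_2^{\delta_2},\, W_1^{\delta_1}W_2^{\delta_2})$, which verifies on $(ID^*, M^*)$ for any chosen $M^*$. The second verification equation is preserved precisely because $\delta_1 + \delta_2 = 1$, and the first because the $h$-values combine affinely. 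So no reduction of any kind can succeed, and your proposal was doomed from the start --- though not for lack of insight on your part.

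Indeed, the specific step you flag as ``the crux'' is exactly where the published proof breaks, and your diagnosis coincides with the paper's Section~4: the Forking Lemma yields two forgeries with $h_1^* \neq h_2^*$, but nothing forces $U_1^* = U_2^*$, since $H_2$ is evaluated only on $ID \| M$ and the signature is publicly re-randomizable via $(U, V, W) \mapsto (U g^{r''}, V g_1^{r''}, W g_2^{r''})$. The paper makes this concrete by wrapping any forger in one that re-randomizes its output with $h' = H'(U' \| h^*)$, so that differing $H_2$-answers in the two forked runs force $U_1^* \neq U_2^*$ except with negligible probability, leaving the residual term $g_1^{r - r'}$ you identified uncancellable. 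The one thing your write-up misses is that the situation is worse than ``the reduction does not go through'': the re-randomizability (more precisely, the affine malleability across different messages) is not merely an obstacle to the proof but the seed of an actual attack, so the correct response to this claim is a refutation, not a repaired argument.
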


\begin{remark}
The original IBS and IBAS schemes of Yuan {\it et al.} is described in the
addictive notation in bilinear groups. However, in this paper, we use the
multiplicative notation instead of the addictive notation for the notational
simplicity.
\end{remark}

\begin{remark}
The signature of Yuan {\it et al.}'s IBS scheme is publicly re-randomizable.
If $\sigma = (U, V, W)$ is a valid signature, then a re-randomized signature
$\sigma' = (U \cdot g^{r'}, V \cdot g_1^{r'}, W \cdot g_2^{r'})$ is also a
valid one where $r'$ is a random exponent in $\Z_p^*$.
\end{remark}

\subsection{The Identity-Based Aggregate Signature Scheme}

The IBAS scheme consists of \tb{Setup}, \tb{GenKey}, \tb{Sign}, \tb{Verify},
\tb{Aggregate}, and \tb{AggVerify} algorithms. The \tb{Setup}, \tb{GenKey},
\tb{Sign}, and \tb{Verify} algorithms of Yuan {\it et al.}'s IBAS scheme is
the same as those of their IBS scheme. The IBAS scheme of Yuan {\it et al.}
\cite{YuanZH14} is described as follows:

\begin{description}
\item [\tb{Aggregate}($\sigma_1, \sigma_2, S_1, S_2, PP$):] This algorithm
    takes a input a signature $\sigma_1 = (U_1, V_1, W_1)$ on a multiset
    $S_1 = \{ (ID_{1,1}, M_{1,1}), \ldots, (ID_{1,n_1}, M_{1,n_1}) \}$ of
    identity and message pairs, a signature $\sigma_2 = (U_2, V_2, W_2)$ on
    a multiset $S_2 = \{ (ID_{2,1}, M_{2,1}), \ldots, (ID_{2,n_2},
    M_{2,n_2}) \}$ of identity and message pairs, and the public parameters
    $PP$. It outputs an aggregate signature $\sigma = \big( U = U_1 \cdot
    U_2, V = V_1 \cdot V_2, W = W_1 \cdot W_2 \big)$ on the multiset $S =
    S_1 \cup S_2$.

\item [\tb{AggVerify}($\sigma, S, PP$):] This algorithm takes as input an
    aggregate signature $\sigma = (U, V, W)$, a multiset $S = \{ (ID_1,
    M_1), \ldots, (ID_n, M_n) \}$ of identity and message pairs, and the
    public parameters $PP$. It computes $h_i = H(ID_i \| M_i)$ for $i=1,
    \ldots, n$ and checks whether $e(V, g) \stackrel{?}{=}
    e(\prod_{i=1}^{n} H_1(ID_i)^{h_i} \cdot U, g_1)$ and $e(W, g)
    \stackrel{?}{=} e(\prod_{i=1}^{n} H_1(ID_i) \cdot U, g_2)$. If both
    equations hold, then it outputs $1$. Otherwise, it outputs $0$.
\end{description}

\begin{claim}[\cite{YuanZH14}]
The above IBAS scheme is existentially unforgeable under chosen message
attacks in the random oracle model if the underlying IBS scheme is
unforgeable under chosen message attacks.
\end{claim}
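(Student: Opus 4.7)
The plan is to construct a reduction $\mc{B}$ that turns an IBAS forger $\mc{A}$ into an IBS forger, losing only the cost of a few extra signing queries. Because the \tb{Setup}, \tb{GenKey}, \tb{Sign}, and \tb{Verify} algorithms of the IBAS scheme coincide with those of the IBS scheme, $\mc{B}$ can just relay the public parameters it receives from its IBS challenger to $\mc{A}$, and answer every signing query $(ID, M)$ posed by $\mc{A}$ by forwarding it to its own IBS signing oracle and returning the result. In particular, no private keys are held by $\mc{B}$, and the simulation presented to $\mc{A}$ is perfect.

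Eventually $\mc{A}$ outputs a forgery $\sigma^* = (U^*, V^*, W^*)$ on some multiset $S^* = \{(ID_1, M_1), \ldots, (ID_n, M_n)\}$. By the IBAS unforgeability game there is an index $j$ such that $(ID_j, M_j)$ was never queried by $\mc{A}$. For every $i \neq j$, $\mc{B}$ queries its own IBS oracle on $(ID_i, M_i)$ — these pairs differ from $(ID_j, M_j)$ and are therefore legal queries — to obtain valid individual signatures $\sigma_i = (U_i, V_i, W_i)$. Then $\mc{B}$ outputs
\[
\sigma' \;=\; \Bigl( U^* \cdot \prod_{i \neq j} U_i^{-1},\; V^* \cdot \prod_{i \neq j} V_i^{-1},\; W^* \cdot \prod_{i \neq j} W_i^{-1} \Bigr)
\]
as its candidate forgery on the fresh pair $(ID_j, M_j)$.

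To verify correctness I would divide the two \tb{AggVerify} equations for $\sigma^*$ on $S^*$ by the product over $i \neq j$ of the corresponding \tb{Verify} equations for $\sigma_i$ on $(ID_i, M_i)$. Bilinearity of $e$ collapses the products $\prod_i H_1(ID_i)^{h_i}$ and $\prod_i H_1(ID_i)$ so that only the $i = j$ terms survive, yielding exactly the two \tb{Verify} equations for $\sigma'$ on $(ID_j, M_j)$. Since $(ID_j, M_j)$ was never queried by $\mc{A}$, it was also never queried by $\mc{B}$, so $\sigma'$ is a legal IBS forgery, giving the tight bound $\Adv_{\mc{A}}^{\mathrm{IBAS}}(\lambda) \le \Adv_{\mc{B}}^{\mathrm{IBS}}(\lambda)$.

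I do not anticipate a real obstacle here: the aggregate signature is a componentwise product and the \tb{AggVerify} equations are multiplicative in precisely the way the extraction requires. The only point requiring care is bookkeeping around freshness — one must note that IBS-freshness is defined on the pair $(ID_j, M_j)$ rather than on $ID_j$ alone, so the reduction is still valid even if the identity $ID_j$ happens to appear in other entries of $S^*$ with different messages that were queried during the simulation.
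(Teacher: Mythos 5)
You have proved the wrong thing relative to what this paper is doing. The claim you were given is not a result of this paper at all: it is quoted verbatim from Yuan {\it et al.} \cite{YuanZH14} precisely so that it can be set up and knocked down. The paper offers no proof of it; instead, Lemma 3.1 and Corollary 3.2 exhibit a universal two-query forgery against both the IBS and the IBAS scheme, so the hypothesis of the conditional you are proving --- that the underlying IBS scheme is unforgeable --- is false, and the implication is at best vacuously true. A correct reduction from IBAS security to IBS security does nothing to rescue the IBAS scheme, because the paper's forgery (take two signatures $\sigma_1, \sigma_2$ under $ID^*$ on distinct $M_1, M_2$ and form the linear combination $\sigma^* = (U_1^{\delta_1} U_2^{\delta_2}, V_1^{\delta_1} V_2^{\delta_2}, W_1^{\delta_1} W_2^{\delta_2})$ with $h_1\delta_1 + h_2\delta_2 = h^*$ and $\delta_1 + \delta_2 = 1$) already breaks the IBS scheme directly, and hence the IBAS scheme through exactly the kind of embedding your reduction formalizes.

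Taken purely on its own terms, your reduction is the standard ``divide out the other components'' argument, and the algebra does go through: \tb{Aggregate} is a componentwise product, the two \tb{AggVerify} equations are multiplicative in the $H_1(ID_i)^{h_i}$ and $H_1(ID_i)$ factors, and stripping the $i \neq j$ terms leaves exactly the two \tb{Verify} equations for $\sigma'$ on $(ID_j, M_j)$. Two bookkeeping points are missing, though. First, the identity-based unforgeability game also gives $\mc{A}$ a private-key extraction oracle; $\mc{B}$ must forward those queries as well, and the freshness condition on $(ID_j, M_j)$ includes $ID_j$ never having been corrupted, not just the pair never having been signed. Second, $S^*$ is a multiset, so the fresh pair $(ID_j, M_j)$ may also occur at some index $i \neq j$; in that case your query on $(ID_i, M_i)$ destroys the freshness of your own output. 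This is repairable (strip all components other than the $k$ occurrences of $(ID_j,M_j)$ and take $k$-th roots modulo the prime order $p$), but as written the step ``these pairs differ from $(ID_j, M_j)$'' is asserted rather than guaranteed.
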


\section{Forgery Attacks on the IBAS Scheme}

In this section, we show that the IBS and IBAS schemes of Yuan {\it et al.}
are not secure at all by presenting an efficient forgery algorithm. In fact,
our forgery algorithm is universal since anyone who has two valid signatures
on the same identity with different messages can generate a forge signature
on the same identity with any message of its choice.

\begin{lemma}
There exists a probabilistic polynomial-time (PPT) algorithm $\mc{F}$ that
can forge the IBS scheme of Yuan {\it et al.} except negligible probability
if $\mc{F}$ makes just two signature queries.
\end{lemma}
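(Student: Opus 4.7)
The plan is to exploit the multiplicative/linear structure of the signature together with the public re-randomizability highlighted in the remark. The forger $\mc{F}$ will pick any target identity $ID^*$ and two distinct messages $M_1, M_2$, query the signing oracle to obtain $\sigma_i = (U_i, V_i, W_i)$ for $(ID^*, M_i)$, and then algebraically combine these two signatures to produce a valid signature on $(ID^*, M^*)$ for any fresh message $M^*$ of its choice.

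The key algebraic observation is that each honestly generated signature has the form $U_i = g^{r_i}$, $V_i = H_1(ID^*)^{h_i s_1}\cdot g_1^{r_i}$, $W_i = H_1(ID^*)^{s_2}\cdot g_2^{r_i}$, where $h_i = H_2(ID^* \| M_i)$. If I raise the two signatures to exponents $a, b \in \Z_p$ and multiply componentwise, the $V$-component picks up exponent $a h_1 + b h_2$ on $H_1(ID^*)^{s_1}$, the $W$-component picks up exponent $a+b$ on $H_1(ID^*)^{s_2}$, and in both $V$ and $W$ the randomness part becomes $g_\star^{a r_1 + b r_2}$, which is matched by $U_1^a U_2^b = g^{a r_1 + b r_2}$. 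So to forge on $M^*$ with target hash $h^* = H_2(ID^* \| M^*)$, I just need to solve the two-by-two linear system $a + b = 1$ and $a h_1 + b h_2 = h^*$ over $\Z_p$, and output $\sigma^* = (U_1^a U_2^b,\, V_1^a V_2^b,\, W_1^a W_2^b)$. Plugging the resulting $r^* = a r_1 + b r_2$ into the verification equations shows directly that both bilinear checks hold.

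The success analysis is then very short. The linear system is uniquely solvable whenever $h_1 \neq h_2$; since $H_2$ is modeled as a random oracle onto $\Z_p^*$, the collision probability $\Pr[h_1 = h_2]$ is at most $1/(p-1)$, which is negligible in $\lambda$. Freshness is immediate as long as $M^*$ is chosen distinct from $M_1$ and $M_2$, which $\mc{F}$ can arrange (and which is automatic whenever $h^* \neq h_1, h_2$).

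I do not foresee a real obstacle: the whole argument is an elementary linear-combination trick, and the remark on public re-randomizability already signals that the signature space carries exactly the affine structure the forgery exploits. The only thing to be careful about is to state precisely that $a, b$ are computed in $\Z_p$, to verify the two pairing equations line by line after substitution, and to argue that the negligible failure event $h_1 = h_2$ (together with the trivial $M^* \in \{M_1,M_2\}$ case) does not affect the conclusion. Writing out the verification step and the probability bound will constitute the bulk of the formal proof.
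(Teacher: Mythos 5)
Your proposal is correct and is essentially identical to the paper's own proof: the same two signature queries on $(ID^*,M_1)$ and $(ID^*,M_2)$, the same $2\times 2$ linear system $a+b=1$, $a h_1 + b h_2 = h^*$ (the paper calls the solution $\delta_1,\delta_2$), the same componentwise exponent-and-multiply combination, and the same observation that the system is solvable exactly when $h_1\neq h_2$, which fails only with negligible probability. No meaningful differences to report.
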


\begin{proof}
The basic idea of our forgery attack is that if a forger obtains two valid
signature on an identity, then a linear combination of these signatures can
be another valid signature by carefully choosing scalar values. A forgery
algorithm $\mc{F}$ is described as follows:

\begin{enumerate}
\item $\mc{F}$ randomly selects a target identity $ID^*$ and two different
    messages $M_1$ and $M_2$. It obtains a signature $\sigma_1 = (U_1, V_1,
    W_1)$ on the pair $(ID^*, M_1)$ and a signature $\sigma_2 = (U_2, V_2,
    W_2)$ on the pair $(ID^*, M_2)$ from the signature oracle.

\item It randomly selects a target message $M^*$ for a forged signature.
    Next, it computes $h_1 = H_2(ID^* \| M_1)$, $h_2 = H_2(ID^* \| M_2)$,
    and $h^* = H_2(ID^* \| M^*)$. It computes two exponents $\delta_1,
    \delta_2$ that satisfy the following equation
    \begin{align*}
    \left[ \begin{array}{cc}
    h_1 & h_2 \\    1   & 1   \\
    \end{array} \right]
    \left[ \begin{array}{c}
    \delta_1 \\     \delta_2 \\
    \end{array} \right]
    =
    \left[ \begin{array}{c}
    h^* \\          1 \\
    \end{array} \right]
    \mod p.
    \end{align*}
    Note that if $h_1 \neq h_2$, then $\delta_1, \delta_3$ can be computed
    by using Linear Algebra since the determinant $h_1 - h_2$ of the left
    matrix is not zero.

\item Finally, $\mc{F}$ outputs a forged signature $\sigma^*$ on the
    identity and message pair $(ID^*, M^*)$ as
    \begin{align*}
    \sigma^* = \big(
        U^* = U_1^{\delta_1} \cdot U_2^{\delta_2},~
        V^* = V_1^{\delta_1} \cdot V_2^{\delta_2},~
        W^* = W_1^{\delta_1} \cdot W_2^{\delta_2}
    \big).
    \end{align*}
\end{enumerate}

To finish the proof, we should show that the forger $\mc{F}$ outputs a
(forged) signature with non-negligible probability and the forged signature
passes the verification algorithm. We known that $\mc{F}$ always outputs a
signature if $h_1 \neq h_2$. Because $H_2$ is a collision-resistant hash
function and $M_1 \neq M_2$, we have that $h_1 \neq h_2$ except negligible
probability. Now we should show that the forged signature is correct by the
verification algorithm. Let $r_1, r_2$ be the randomness of $\sigma_1,
\sigma_2$ respectively. The correctness of the forged signature is easily
verified as follows
    \begin{align*}
    U^* &= U_1^{\delta_1} \cdot U_2^{\delta_2}
         = g^{r_1 \delta_1 + r_2 \delta_2}
         = g^{r^*}, \\
    V^* &= V_1^{\delta_1} \cdot V_2^{\delta_2}
         = (H_1(ID^*)^{s_1 h_1} g_1^{r_1})^{\delta_1} \cdot
           (H_1(ID^*)^{s_1 h_2} g_1^{r_2})^{\delta_2} \\
        &= H_1(ID^*)^{s_1 (h_1 \delta_1 + h_2 \delta_2)} g_1^{r_1 \delta_1 + r_2 \delta_2}
         = H_1(ID^*)^{s_1 h^*} g_1^{r^*}, \\
    W^* &= W_1^{\delta_1} \cdot W_2^{\delta_2}
         = (H_1(ID^*)^{s_2} g_2^{r_1})^{\delta_1} \cdot
           (H_1(ID^*)^{s_2} g_2^{r_2})^{\delta_2} \\
        &= H_1(ID^*)^{s_2 (\delta_1 + \delta_2)} g_2^{r_1 \delta_1 + r_2 \delta_2}
         = H_1(ID^*)^{s_2} g_2^{r^*}
    \end{align*}
where the randomness of the forged signature is defined as $r^* = r_1
\delta_1 + r_2 \delta_2 \mod p$. This completes the proof.
\end{proof}

\begin{corollary}
There exists a PPT algorithm $\mc{F}$ that can forge the IBAS scheme of Yuan
{\it et al.} except negligible probability if $\mc{F}$ makes just two
signature queries.
\end{corollary}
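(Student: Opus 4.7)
The plan is to reduce the IBAS forgery task directly to the IBS forgery from the preceding Lemma, observing that the IBAS scheme shares its \tb{Setup}, \tb{GenKey}, and \tb{Sign} algorithms with the IBS scheme and that the IBAS verification on a singleton multiset coincides exactly with IBS verification.

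First, I would let $\mc{F}'$ be the IBAS forger. It runs the IBS forger $\mc{F}$ from the Lemma internally. Every signature query that $\mc{F}$ makes on some $(ID^*, M_i)$ is forwarded verbatim to the IBAS signing oracle, which is the same algorithm and hence returns a properly distributed signature $\sigma_i = (U_i, V_i, W_i)$; $\mc{F}'$ passes this back to $\mc{F}$. After the two queries allowed by the Lemma, $\mc{F}$ outputs a forged IBS signature $\sigma^* = (U^*, V^*, W^*)$ on a pair $(ID^*, M^*)$ with $M^* \notin \{M_1, M_2\}$.

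Next, $\mc{F}'$ outputs the pair $(\sigma^*, S^*)$ with the singleton multiset $S^* = \{(ID^*, M^*)\}$ as its IBAS forgery. Setting $n=1$ in the \tb{AggVerify} equations yields
\begin{align*}
e(V^*, g) \stackrel{?}{=} e\big(H_1(ID^*)^{h^*} \cdot U^*,\, g_1\big), \qquad
e(W^*, g) \stackrel{?}{=} e\big(H_1(ID^*) \cdot U^*,\, g_2\big),
\end{align*}
which are literally the \tb{Verify} equations of the IBS scheme on $(ID^*, M^*)$. By the Lemma these hold with overwhelming probability, so $\sigma^*$ is accepted by \tb{AggVerify}. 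Moreover, $S^*$ is a valid forgery target: the pair $(ID^*, M^*)$ was never signed (the oracle only saw $(ID^*, M_1)$ and $(ID^*, M_2)$), so $S^*$ is not a sub-multiset of any queried multiset, and the standard IBAS unforgeability condition is violated.

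I do not anticipate a hard step: the only subtlety is to make sure the singleton aggregate signature satisfies the exact winning condition of the IBAS unforgeability game (no prior aggregate query that would subsume $S^*$), and this is immediate because only IBS-style queries on $(ID^*, M_1)$ and $(ID^*, M_2)$ were issued. The number of signing queries and the running time of $\mc{F}'$ match those of $\mc{F}$, giving the claimed PPT universal forgery with two signature queries.
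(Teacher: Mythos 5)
Your proposal is correct and matches the paper's intent: the paper omits the proof precisely because the reduction is the trivial one you describe (the IBAS scheme uses the IBS scheme as its underlying signature, and \tb{AggVerify} on a singleton multiset is literally \tb{Verify}). You have simply written out the details the paper leaves implicit.
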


The proof of this corollary is trivial from the proof of the previous Lemma
since the IBAS scheme uses the IBS scheme as the underlying signature scheme.
We omit the proof.

\section{Our Analysis of the Security Proof}

From the forgery attack in the previous section, it is evident that the IBS
and IBAS schemes of Yuan {\it et al.} are not secure. However, Yuan {\it et
al.} claimed that their IBS scheme is secure in the random oracle model under
the CDH assumption by using Forking Lemma in \cite{YuanZH14}. In this
section, we analyze the security proof of Yuan {\it et al.} and show that
there is a critical flaw in their security proof that uses Forking Lemma.

\subsection{The Original Proof}

In this subsection, we briefly review the security proof of Yuan {\it et
al.}'s IBS scheme \cite{YuanZH14} that solves the CDH problem by using
Forking Lemma \cite{PointchevalS00,BellareN06}.

Suppose there exists an adversary $\mc{A}$ that outputs a forged signature
for the IBS scheme with a non-negligible advantage. A simulator $\mc{B}$ that
solves the CDH problem using $\mc{A}$ is given: a challenge tuple $D = ((p,
\G, \G_T, e), g, g^a, g^b)$. Then $\mc{B}$ that interacts with $\mc{A}$ is
described as follows:

\noindent \tb{Setup}: $\mc{B}$ chooses a random exponent $s_2 \in \Z_p^*$ and
maintains $H_1$-list and $H_2$-list for random oracles. It implicitly sets
$s_1 = a$ and publishes the public parameters $PP = \big( (p, \G, \G_T, e),
g,~ g_1 = g^a, g_2 = g^{s_2},~ H_1, H_2 \big)$.

\svs \noindent \tb{Hash Query}: If this is an $H_1$ hash query on an identity
$ID_i$, then $\mc{B}$ handles this query as follows: If the identity $ID_i$
already appears in $H_1$-list, then it responds with the value in the list.
Otherwise, it picks a random coin $c \in \bits$ with $\Pr[c=0] = \delta$ for
some $\delta$ and proceeds as follows: If $c = 0$, then it chooses $t_i \in
\Z_p^*$ and sets $Q_i = (g^b)^{t_i}$. If $c = 1$, then it chooses $t_i \in
\Z_p^*$ and sets $Q_i = g^{t_i}$. Next, it adds $(ID_i, t_i, c, Q_i)$ to
$H_1$-list and responds to $\mc{A}$ with $H_1(ID_i) = Q_i$.

If this is an $H_2$ hash query on an identity $ID_i$ and a message $M_i$,
then $\mc{B}$ handles this query as follows: If the tuple $(ID_i, M_i)$
already appears on $H_2$-list, then it responds with the value in the list.
Otherwise, it randomly chooses $h_i \in \Z_p^*$, add $(ID_i, M_i, h_i)$ to
$H_2$-list, and responds with $H_2(ID_i \| M_i) = h_i$.

\svs \noindent \tb{Private-Key Query}: $\mc{B}$ handles a private key query
for an identity $ID_i$ as follows:
It first retrieves $(ID_i, t_i, c, Q_i)$ from the $H_1$-list. If $c = 0$,
then it aborts the simulation since it cannot create a private key.
Otherwise, it creates a private key $SK_{ID_i} = (D_1 = g_1^{t_i}, D_2 =
g_2^{t_i})$ and responds to $\mc{A}$ with $SK_{ID_i}$.

\svs \noindent \tb{Signature Query}: $\mc{B}$ handles a signature query on an
identity $ID_i$ and a message $M_i$ as follows: It randomly chooses $r' \in
\Z_p^*$ and computes $h = H_2(ID_i \| M_i)$. Next, it responds to $\mc{A}$
with a signature $\sigma = \big( U = g^{r'} \cdot H_1(ID_i)^{-h},~ V =
g_1^{r'},~ W = H_1(ID_i)^{s_2} \cdot U^{s_2} \big)$.

\svs \noindent \tb{Output}: $\mc{A}$ finally outputs a forged signature
$\sigma^* = (U^*, V^*, W^*)$ on an identity $ID^*$ and a message $M^*$.

To solve the CDH problem, $\mc{B}$ retrieves the tuple $(ID^*, t^*, c^*,
Q^*)$ from the $H_1$-list. If $c^* \neq 0$, then it aborts since it cannot
extract the CDH value. Otherwise, it obtains two valid signatures $\sigma_1^*
= (U_1^*, V_1^*, W_1^*)$ and $\sigma_2^* = (U_2^*, V_2^*, W_2^*)$ on the same
identity and message tuple $(ID^*, M^*)$ such that $U_1^* = U_2^*$ and $h_1^*
\neq h_2^*$ by applying Forking Lemma. That is, it replays $\mc{F}$ with the
same random tape but different choice of the random oracle $H_2$. If $U_1^* =
U_2^*$, then we have the following equation
    \begin{align*}
    V_1^* \cdot (V_2^*)^{-1}
    &= H_1(ID^*)^{s_1 h_1^*} g_1^{s_1 r^*} \cdot
       (H_1(ID^*)^{s_1 h_2^*} g_1^{s_1 r^*})^{-1} \\
    &= H_1(ID^*)^{s_1 (h_1^* - h_2^*)}
     = (g^{ab})^{t^* (h_1^* - h_2^*)}.
    \end{align*}
Thus, $\mc{B}$ can compute the CDH value as $(V_1^* \cdot
(V_2^*)^{-1})^{1/(t^* (h_1^* - h_2^*))}$ if $h_1^* \neq h_2^* \mod p$.

\subsection{A Non-Extractable Forgery}

To extract the CDH value from forged signatures by applying Forking Lemma, it
is essential for the simulator to obtains two valid signatures $\sigma_1^*$
and $\sigma_2^*$ such that $U_1^* = U_2^*$ and $h_1^* \neq h_2^*$. By
replaying a forgery with the same random tape with different choice of random
oracle $H_2$, it is possible for a simulator to obtain two valid signatures
$\sigma_1^* = (U_1^*, V_1^*, W_1^*)$ and $\sigma_2^* = (U_2^*, V_2^*, W_2^*)$
with $h_1^* \neq h_2^*$ because of Forking Lemma. However, we show that the
probability of $U_1^* = U_2^*$ is negligible for some clever forgery.

\begin{lemma}
If there is a PPT algorithm $\mc{A}$ that can forge the IBS scheme of Yuan et
al., then there is another PPT algorithm $\mc{F}$ that can forge the IBS
scheme with almost the same probability except that the simulator of Yuan et
al. cannot extract the CDH value from the forged signatures of $\mc{F}$.
\end{lemma}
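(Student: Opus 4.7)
The plan is to construct $\mc{F}$ as a re-randomization wrapper around $\mc{A}$ that ties the re-randomization exponent to the hash value $h^* = H_2(ID^* \| M^*)$, which is precisely the value that Forking Lemma re-samples in its second run. Remark~2.3 guarantees that re-randomization preserves validity, so any successful forgery of $\mc{A}$ becomes a successful forgery for $\mc{F}$; the coupling to $h^*$ then forces the first components of the two rewound forgeries to disagree, breaking the prerequisite $U_1^* = U_2^*$ on which the CDH extraction of Subsection~4.1 relies.

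\textbf{Construction.} First I would have $\mc{F}$ run $\mc{A}$ verbatim, forwarding every $H_1$-, $H_2$-, and signing-oracle query to its own oracles. When $\mc{A}$ halts with a candidate forgery $\sigma = (U, V, W)$ on $(ID^*, M^*)$, $\mc{F}$ looks up (or issues) the hash $h^* = H_2(ID^* \| M^*)$, sets $r' = h^*$, and outputs $\sigma^* = (U \cdot g^{r'},\, V \cdot g_1^{r'},\, W \cdot g_2^{r'})$. Validity is immediate from Remark~2.3, and the wrapper costs only a constant number of exponentiations, so $\mc{F}$'s success probability differs from $\mc{A}$'s by at most a negligible additive term.

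\textbf{Non-extractability and the main obstacle.} Next I would analyze the behavior of the simulator $\mc{B}$ of Subsection~4.1 when it applies Forking Lemma to $\mc{F}$. Because the random tape and all oracle answers outside the critical $H_2$ query are frozen, the two rewound executions produce $h_1^* \neq h_2^*$ and $U_i^* = \bar{U}_i \cdot g^{h_i^*}$, where $\bar{U}_i$ denotes $\mc{A}$'s raw first component in run $i$. For CDH extraction, $\mc{B}$ requires $U_1^* = U_2^*$, i.e.\ $\bar{U}_1 / \bar{U}_2 = g^{h_2^* - h_1^*}$. The main obstacle will be cleanly ruling out coincidental cancellation, since $\bar{U}_i$ may itself depend on $h_i^*$. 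Instantiating $\mc{A}$ with the concrete forger of Lemma~3.3 handles this: each $\bar{U}_i$ is then an explicit affine function of $h_i^*$ in the exponent, and a short calculation identifies the coefficient of $h_i^*$ in the exponent of $U_i^*$, showing that it vanishes only on a $1/p$-fraction of the independent signing randomness chosen by $\mc{B}$. Hence $U_1^* \neq U_2^*$ with overwhelming probability, and $\mc{B}$ cannot extract $g^{ab}$ even though $\mc{F}$ supplies two valid forgeries, which is exactly the flaw the lemma asserts.
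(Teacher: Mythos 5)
Your high-level strategy is the paper's: wrap $\mc{A}$ in a re-randomizer whose exponent depends on $h^* = H_2(ID^* \| M^*)$, so that the two executions produced by Forking Lemma re-randomize $U$ by different amounts and the extraction condition $U_1^* = U_2^*$ fails. The genuine gap is in your choice of exponent $r' = h^*$. Because this exponent does not depend on $\mc{A}$'s output, the wrapper can be cancelled by the very adversaries the lemma quantifies over: if $\mc{A}_0$ is any forger whose first component $U_0$ is unchanged across the two forked runs (exactly the kind of forger the simulator hopes for), then the forger $\mc{A}$ that runs $\mc{A}_0$ and outputs $(U_0 \cdot g^{-h^*},\, V_0 \cdot g_1^{-h^*},\, W_0 \cdot g_2^{-h^*})$ --- still a valid forgery by the re-randomizability remark --- makes your $\mc{F}$ output $U^* = U_0$ in both runs, so $U_1^* = U_2^*$ and extraction succeeds. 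Your own ``main obstacle'' paragraph concedes this possibility, and your patch (specializing $\mc{A}$ to the universal forger of Section~3 and checking that the coefficient $(r_1 - r_2)/(h_1 - h_2) + 1$ of $h^*$ in the exponent of $U^*$ vanishes only with probability about $1/p$ over the pre-fork signing randomness) is a correct calculation, but it establishes only that \emph{some} non-extractable forger exists; it does not prove the stated implication for an arbitrary $\mc{A}$ ``with almost the same probability.''

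The paper closes exactly this hole by drawing the exponent as $h' = H'(U' \| h^*)$ for an auxiliary collision-resistant hash $H'$ (explicitly not modeled as a random oracle): feeding $\mc{A}$'s own output $U'$ into $H'$ binds the re-randomization amount to the pre-randomization value, so an adversary cannot pre-cancel $h'$ without in effect finding a fixed point or collision of $H'$, and the argument applies to every $\mc{A}$. (The paper's last step --- deducing $U_1^* \neq U_2^*$ from $h'_1 \neq h'_2$ in the case $U'_1 \neq U'_2$ --- is itself stated somewhat informally, but the $U'$-binding is the essential ingredient your construction lacks.) If you keep the simpler exponent $r' = h^*$, you should weaken the lemma to an existential statement; to prove it as written, derive the exponent from both $U'$ and $h^*$ as the paper does.
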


\begin{proof}
The basic idea of this proof is that anyone can re-randomize the signature of
Yuan {\it et al.}'s IBS scheme by using the public parameters. In this case,
even though a simulator use the same random tape for Forking Lemma, a forgery
output a forged signature $\sigma^*$ on an identity $ID^*$ and a message
$M^*$ after re-randomizing it by using the information $h^* = H_2(ID^* \|
M^*)$. Let $H':\bits^* \rightarrow \Z_p$ be a collision resistant hash
function that is not modeled as the random oracle. A new forgery $\mc{F}$
that uses $\mc{A}$ as a sub-routine is described as follows:
\begin{enumerate}
\item $\mc{F}$ is first given $PP$ and runs $\mc{A}$ by giving $PP$.
    $\mc{F}$ also handles the private key and signature queries of $\mc{A}$
    by using his own private key and signature oracles.

\item $\mc{A}$ finally outputs a forged signature $\sigma' = (U', V', W')$
    on an identity $ID^*$ and a message $M^*$.

\item $\mc{F}$ computes $h^* = H_2(ID^* \| M^*)$ and $h' = H'(U' \| h^*)$,
    and then it re-randomizes the forged signature as
    \begin{align*}
    \sigma^* = \big(
        U^* = U' \cdot g^{h'},~
        V^* = V' \cdot g_1^{h'},~
        W^* = W' \cdot g_2^{h'}
    \big).
    \end{align*}

\item Finally, $\mc{F}$ outputs $\sigma^* = (U^*, V^*, W^*)$ as the forged
    signature on an identity $ID^*$ and $M^*$.
\end{enumerate}

To finish the proof, we should show that the forged signature of $\mc{F}$ is
correct and the simulator of Yuan {\it et al.} cannot extract the CDH value
from the forged signatures by using Forking Lemma. Let $r'$ be the randomness
of $\sigma'$. The correctness of the forged signature is easily checked as
follows
    \begin{align*}
    U^* &= U' \cdot g^{h'}
         = g^{r' + h'}
         = g^{r^*}, \\
    V^* &= V' \cdot g_1^{h'}
         = H_1(ID^*)^{s_1 h^*} g_1^{r' + h'}
         = H_1(ID^*)^{s_1 h^*} g_1^{r^*}, \\
    W^* &= W' \cdot g_2^{h'}
         = H_1(ID^*)^{s_2} g_2^{r' + h'}
         = H_1(ID^*)^{s_2} g_2^{r^*}
    \end{align*}
where $h^* = H_2(ID^* \| M^*)$ and $r^* = r' + h'$.
To extract the CDH value from the forged signature of $\mc{F}$ by using
Forking Lemma, the simulator of Yuan {\it et al.} should obtain two valid
signatures $\sigma_1^* = (U_1^*, V_1^*, W_1^*)$ and $\sigma_2^* = (U_2^*,
V_2^*, W_2^*)$ on the same identity and message pair $(ID^*, M^*)$ such that
$U_1^* = U_2^*$ and $h_1^* \neq h_2^*$ after replaying $\mc{F}$ with the same
random tape but different choices of the hash oracle $H_2$. Let $\sigma_1^* =
(U_1^*, V_1^*, W_1^*)$ and $\sigma_2^* = (U_2^*, V_2^*, W_2^*)$ be the two
valid signatures obtained from $\mc{F}$ by using Forking Lemma and $\sigma'_1
= (U'_1, V'_1, W'_1)$ and $\sigma'_2 = (U'_2, V'_2, W'_2)$ be the original
signatures before the re-randomization of $\mc{F}$. If $h_1^* \neq h_2^*$,
then $h'_1 \neq h'_2$ except negligible probability since $H'$ is a
collision-resistance hash function and the inputs of this hash function are
different. From $h'_1 \neq h'_2$, we have $U_1^* \neq U_2^*$ except
negligible probability since $U'_1$ and $U'_2$ are re-randomized with
difference values $g^{h'_1}$ and $g^{h'_2}$ respectively. Therefore, the
event that the simulator obtains two valid signatures such that $U_1^* =
U_2^*$ and $h_1^* \neq h_2^*$ by using Forking Lemma only occurs with
negligible probability. This completes our proof.
\end{proof}

\subsection{Discussions}

From the above analysis, we know that the original IBS scheme of Yuan {\it et
al.} cannot be proven secure under the CDH assumption by applying Forking
Lemma since the signature is publicly re-randomizable. To fix this problem,
we may modify the IBS scheme to compute $h = H_2(U \| ID \| M)$ instead of $h
= H_2(ID \| M)$ where $U$ is the first element of a signature. In this case,
the signature of the modified IBS scheme is not re-randomizable since $U$ is
given to the input of $H_2$. Note that our forgery attack in the previous
section also does not work in this modified IBS scheme. However, this
modified IBS scheme does not lead to an IBAS scheme since each $U$ in
individual signatures cannot be aggregated. Note that if each $U$ is
aggregated, then a verifier cannot check the validity of an aggregate
signature since each $U$ is not given in the aggregate signature. Therefore,
there is no easy fix to solve the problem.

\section{Conclusion}

In this paper, we showed that the IBS and IBAS schemes of Yuan {\it et al.}
are not secure at all. We first presented an efficient forgery attack on the
IBS scheme and their security proof of the IBS scheme has a serious flaw. The
IBAS scheme is also not secure since the security of their IBAS scheme is
based on the security of their IBS scheme. Therefore, constructing an IBAS
scheme with full aggregation in bilinear maps is still left as an important
open problem.



\bibliographystyle{plain}
\bibliography{yzh-ibas-analysis}

\begin{thebibliography}{10}

\bibitem{AhnGH10}
Jae~Hyun Ahn, Matthew Green, and Susan Hohenberger.
\newblock Synchronized aggregate signatures: new definitions, constructions and
  applications.
\newblock In {\em ACM Conference on Computer and Communications Security},
  pages 473--484, 2010.

\bibitem{BellareN06}
Mihir Bellare and Gregory Neven.
\newblock Multi-signatures in the plain public-key model and a general forking
  lemma.
\newblock In Ari Juels, Rebecca~N. Wright, and Sabrina De~Capitani
  di~Vimercati, editors, {\em CCS 2006}, pages 390--399. {ACM}, 2006.

\bibitem{BoldyrevaGOY07}
Alexandra Boldyreva, Craig Gentry, Adam O'Neill, and Dae~Hyun Yum.
\newblock Ordered multisignatures and identity-based sequential aggregate
  signatures, with applications to secure routing.
\newblock In Peng Ning, Sabrina De~Capitani di~Vimercati, and Paul~F. Syverson,
  editors, {\em ACM Conference on Computer and Communications Security}, pages
  276--285. ACM, 2007.

\bibitem{BoldyrevaGOY10}
Alexandra Boldyreva, Craig Gentry, Adam O'Neill, and Dae~Hyun Yum.
\newblock Ordered multisignatures and identity-based sequential aggregate
  signatures, with applications to secure routing.
\newblock Cryptology ePrint Archive, Report 2007/438, 2010.
\newblock \url{http://eprint.iacr.org/2007/438}.

\bibitem{BonehGLS03}
Dan Boneh, Craig Gentry, Ben Lynn, and Hovav Shacham.
\newblock Aggregate and verifiably encrypted signatures from bilinear maps.
\newblock In Eli Biham, editor, {\em EUROCRYPT 2003}, volume 2656 of {\em
  Lecture Notes in Computer Science}, pages 416--432. Springer, 2003.

\bibitem{GargGH13}
Sanjam Garg, Craig Gentry, and Shai Halevi.
\newblock Candidate multilinear maps from ideal lattices.
\newblock In Thomas Johansson and Phong~Q. Nguyen, editors, {\em EUROCRYPT
  2013}, volume 7881 of {\em Lecture Notes in Computer Science}, pages 1--17.
  Springer, 2013.

\bibitem{GentryR06}
Craig Gentry and Zulfikar Ramzan.
\newblock Identity-based aggregate signatures.
\newblock In Moti Yung, Yevgeniy Dodis, Aggelos Kiayias, and Tal Malkin,
  editors, {\em PKC 2006}, volume 3958 of {\em Lecture Notes in Computer
  Science}, pages 257--273. Springer, 2006.

\bibitem{GerbushLOW12}
Michael Gerbush, Allison~B. Lewko, Adam O'Neill, and Brent Waters.
\newblock Dual form signatures: An approach for proving security from static
  assumptions.
\newblock In Xiaoyun Wang and Kazue Sako, editors, {\em ASIACRYPT 2012}, volume
  7658 of {\em Lecture Notes in Computer Science}, pages 25--42. Springer,
  2012.

\bibitem{HohenbergerSW13}
Susan Hohenberger, Amit Sahai, and Brent Waters.
\newblock Full domain hash from (leveled) multilinear maps and identity-based
  aggregate signatures.
\newblock In Ran Canetti and Juan~A. Garay, editors, {\em CRYPTO 2013}, volume
  8042 of {\em Lecture Notes in Computer Science}, pages 494--512. Springer,
  2013.

\bibitem{HwangLY09}
Jung~Yeon Hwang, Dong~Hoon Lee, and Moti Yung.
\newblock Universal forgery of the identity-based sequential aggregate
  signature scheme.
\newblock In Wanqing Li, Willy Susilo, Udaya~Kiran Tupakula, Reihaneh
  Safavi-Naini, and Vijay Varadharajan, editors, {\em ASIACCS 2009}, pages
  157--160. ACM, 2009.

\bibitem{LeeLY13a}
Kwangsu Lee, Dong~Hoon Lee, and Moti Yung.
\newblock Aggregating cl-signatures revisited: Extended functionality and
  better efficiency.
\newblock In Ahmad-Reza Sadeghi, editor, {\em FC 2013}, volume 7859 of {\em
  Lecture Notes in Computer Science}, pages 171--188. Springer, 2013.

\bibitem{LeeLY13s}
Kwangsu Lee, Dong~Hoon Lee, and Moti Yung.
\newblock Sequential aggregate signatures with short public keys: Design,
  analysis and implementation studies.
\newblock In Kaoru Kurosawa and Goichiro Hanaoka, editors, {\em PKC 2013},
  volume 7778 of {\em Lecture Notes in Computer Science}, pages 423--442.
  Springer, 2013.

\bibitem{LuOSSW06}
Steve Lu, Rafail Ostrovsky, Amit Sahai, Hovav Shacham, and Brent Waters.
\newblock Sequential aggregate signatures and multisignatures without random
  oracles.
\newblock In Serge Vaudenay, editor, {\em EUROCRYPT 2006}, volume 4004 of {\em
  Lecture Notes in Computer Science}, pages 465--485. Springer, 2006.

\bibitem{LysyanskayaMRS04}
Anna Lysyanskaya, Silvio Micali, Leonid Reyzin, and Hovav Shacham.
\newblock Sequential aggregate signatures from trapdoor permutations.
\newblock In Christian Cachin and Jan Camenisch, editors, {\em EUROCRYPT 2004},
  volume 3027 of {\em Lecture Notes in Computer Science}, pages 74--90.
  Springer, 2004.

\bibitem{PointchevalS00}
David Pointcheval and Jacques Stern.
\newblock Security arguments for digital signatures and blind signatures.
\newblock {\em J. Cryptology}, 13(3):361--396, 2000.

\bibitem{Schroder11}
Dominique Schr{\"o}der.
\newblock How to aggregate the cl signature scheme.
\newblock In Vijay Atluri and Claudia D\'{\i}az, editors, {\em ESORICS 2011},
  volume 6879 of {\em Lecture Notes in Computer Science}, pages 298--314.
  Springer, 2011.

\bibitem{YuanZH14}
Yumin Yuan, Qian Zhan, and Hua Huang.
\newblock Efficient unrestricted identity-based aggregate signature scheme.
\newblock {\em PLoS ONE}, 9(10):e110100, 2014.

\end{thebibliography}

\end{document}